\documentclass[letterpaper,11pt]{article}
\usepackage{fullpage}
\usepackage[english]{babel}
\usepackage{graphicx}
\usepackage{color}
\usepackage{amsmath,amsfonts,amssymb,amsthm,graphicx,graphics,epsf}
\usepackage{xspace}
\usepackage{hyperref}
\usepackage{wrapfig}
\usepackage{refcount}
\usepackage{setspace, float}

\graphicspath{{img/}} 

\newtheorem{defin}{Definition}
  
\newtheorem{theo}[defin]{Theorem}
  \newenvironment{theorem}{\begin{theo} \sl}{\end{theo}}
\newtheorem{lem}[defin]{Lemma}
  \newenvironment{lemma}{\begin{lem} \sl}{\end{lem}}
\newtheorem{coro}[defin]{Corollary}
  
\newtheorem{obs}[defin]{Observation}

\newcommand{\pathto}{\ensuremath{\to}}

\newcommand{\isink}{$i$-sink\xspace}
\newcommand{\zerosink}{$0$-sink\xspace}

\newcommand{\zerosinks}{$0$-sinks\xspace}

\newcommand{\etal}{\emph{et al.}\xspace} 

\newcommand{\figref}[1]{Figure~\ref{#1}}

\newcommand{\icone}{$i$-cone\xspace}
\newcommand{\zerocone}{$0$-cone\xspace}
\newcommand{\zerocones}{$0$-cones\xspace}

\newcommand{\YaoThree}{\ensuremath{\mathrm{Y_3}}}

\title{Theta-3 is connected\footnote{A preliminary version of this paper appeared in the proceedings of the 25th Canadian Conference on Computational Geometry (CCCG 2013) \cite{aichholzer2013theta3}}}
\date{}

\author{
Oswin Aichholzer\thanks{Institute for Software Technology, Graz University of Technology.}
\and Sang Won Bae \thanks{Department of Computer Science, Kyonggi University.}
\and Luis Barba \thanks{School of Computer Science, Carleton University.}\ \ 
\thanks{Boursier FRIA du FNRS, D\'epartement d'Informatique, Universit\'e Libre de Bruxelles.}
\and Prosenjit Bose \footnotemark[3]
\and Matias Korman \thanks{National Institute of Informatics, Tokyo, Japan.}\ \ \thanks{JST, ERATO, Kawarabayashi Large Graph Project.}
\and Andr\'e van Renssen \footnotemark[3]
\and Perouz Taslakian \thanks{College of Science and Engineering, American University of Armenia.}
\and Sander Verdonschot \footnotemark[3]
}

\begin{document}
\thispagestyle{empty}
\maketitle

\begin{abstract}
In this paper, we show that the $\theta$-graph with three cones is connected.
We also provide an alternative proof of the connectivity of the Yao graph with three cones.
\end{abstract}

\section{Introduction}
Introduced independently by Clarkson~\cite{clarkson1987approximation} in 1987 and Keil~\cite{keil1988approximating} in 1988, the $\theta$-graph of a set $P$ of points in the plane is constructed as follows. We consider each point $p\in P$ and partition the plane into $m \geq 2$ cones (regions in the plane between two rays originating from the same point) with apex $p$, each defined by two rays at consecutive multiples of $2\pi/m$ radians from the negative $y$-axis; see \figref{fig:Construction of Theta 3} for an illustration. We label the cones $C_0$ through $C_{m-1}$, in clockwise order around $p$, starting from the cone whose angular bisector aligns with
the positive $y$-axis from $p$ if $m$ is odd, or having this axis as its left boundary if $m$ is even. If the apex is not clear from the context, we use $C_i^p$ to denote the cone $C_i$ with apex $p$. We sometimes refer to $C_i^p$ as the \emph{$i$-cone} of $p$. 
To build the $\theta$-graph, we consider each point $p$ and connect it by an edge with the \emph{closest} point in each of its cones. 
However, instead of using the Euclidean distance, we measure distance by orthogonally projecting each point onto the angle-bisector of that cone. 
The \emph{closest} point to $p$ in its $i$-cone is then the point in $C_i^p$ whose projection has the smallest Euclidean distance to $p$.

We use this definition of distance in the remainder of the paper, except for Section~\ref{sec:Yao}, which deals with Yao graphs. 
For simplicity, we assume that no two points of $P$ lie on a line parallel to the boundary of a cone or perpendicular to the angular bisector of a cone, guaranteeing that each point connects to at most one point in each cone. 
We call the $\theta$-graph with $m$ cones the $\theta_m$-graph.

\begin{figure}[t]
\centering
\includegraphics{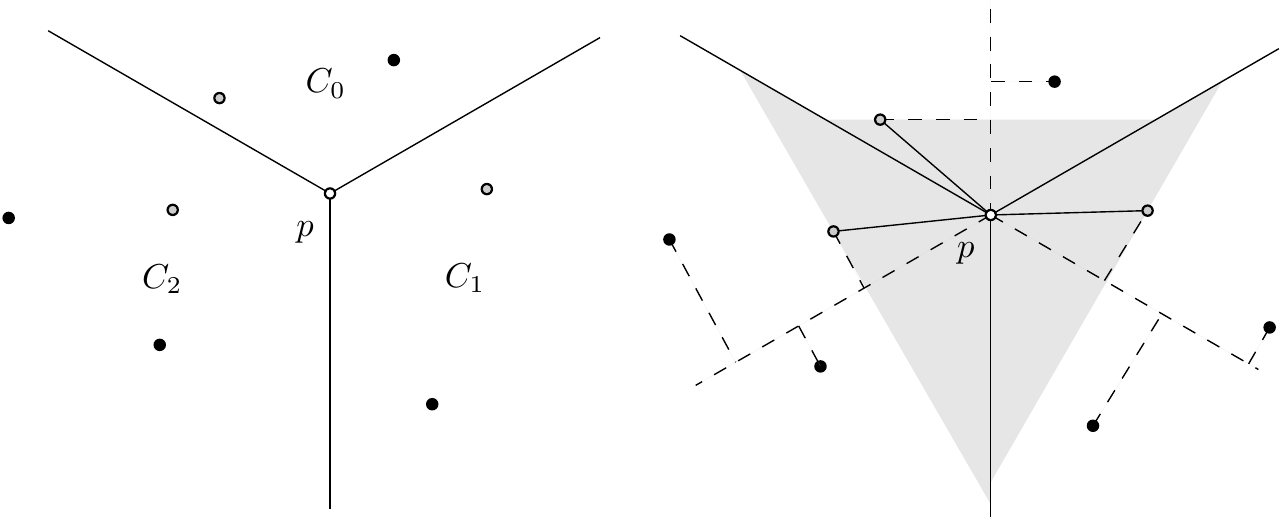}
\caption{\small Left: A point $p$ and its three cones in the $\theta_3$-graph. Right: Point $p$ adds an edge to the closest point in each of its cones, where distance is measured by projecting points onto the bisector of the cone.}
\label{fig:Construction of Theta 3}
\end{figure}

For $\theta$-graphs with an even number of cones, proving connectedness is easy. As the first $m/2$ cones cover exactly the right half-plane, each point will have an edge to a point to its right, if such a point exists. Thus, we can find a path from any point to the rightmost point and, by concatenating these, a path between every pair of points. Unfortunately, if $m$ is odd this property does not hold, as no set of cones covers \emph{exactly} the right half-plane. Therefore, a point is not guaranteed to have an edge to a point to its right, even if such a point exists.

The fact that $\theta$-graphs with more than 6 cones are connected has been known for a long time. In fact, they even guarantee the existence of a \emph{short} path between every pair of points. The length of this path is bounded by a constant times the straight-line Euclidean distance between the two points~\cite{bose2012optimal,bose2013spanning,clarkson1987approximation,keil1988approximating,ruppert1991approximating}. Graphs that have this property are called \emph{geometric spanners}. For more information on geometric spanners, see the book by Narasimhan and Smid~\cite{NS06}.

For a long time, very little was known about $\theta$-graphs with fewer than 7 cones. Bonichon~\etal~\cite{bonichon2010connections} broke ground in this area in 2010, by showing that the $\theta_6$-graph is a geometric spanner. Subsequently, both the $\theta_4$- and $\theta_5$-graphs have been shown to be geometric spanners~\cite{barba2013stretch, bose2013theta5}. El~Molla~\cite{el2009yao} already showed that the $\theta_2$- and $\theta_3$-graphs are not geometric spanners. 
It is straightforward to verify that the $\theta_2$-graph is connected which leaves the $\theta_3$-graph as the only $\theta$-graph for which connectedness has not been proven. 
In this paper, we settle this question by showing that the $\theta_3$-graph is always connected.

The question of connectedness about the $\theta_3$-graph is interesting because the $\theta_3$-graph has some unique properties that cause standard proof techniques for $\theta$-graphs to fail. As such, we hope that the techniques we develop here will lead to more insight into the structure of other $\theta$-graphs. As an example, most proofs for a larger number of cones show that the $\theta$-routing algorithm (always follow the edge to the closest vertex in the cone that contains the destination) returns a short path between any two points. But in the $\theta_3$-graph, $\theta$-routing is not guaranteed to ever reach the destination. The smallest point set that exhibits this behavior has three points, such that for each point, both other points lie in the same cone; see \figref{fig:notstrong}. In fact, this example shows not only that this exact routing strategy fails; it shows that if we consider the edges to be directed (from the point that added them, to the closest point in its cone), the graph is not strongly connected. Therefore, our proof requires more global methods than previous proofs on $\theta$-graphs.

\begin{figure}[t]
\centering
\includegraphics{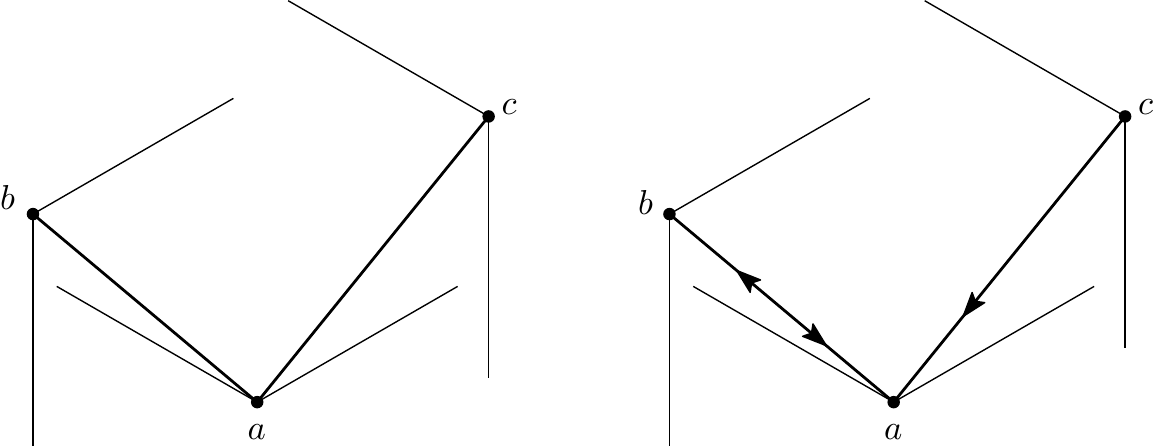}
\caption{\small Left: A point set for which $\theta$-routing does not find a path from $a$ to $c$, as it keeps cycling between $a$ and $b$.
Right: The directed version of the graph is not strongly connected, as there is no path from either $a$ or $b$ to $c$.}
\label{fig:notstrong}
\end{figure}

Most proofs for a larger number of cones use induction on the distance between points or on the size of the empty triangle between a point and its closest point. In the $\theta_3$-graph however, both of these measures can increase when we follow an edge. Thus, applying induction on these distances seems a difficult task. An induction on the number of points similarly fails, as inserting a new point may remove edges that were present before, and it is not obvious that the endpoints of those edges are still connected in the new graph.

The $\theta_3$-graph is strongly related to the \YaoThree-graph, where each point also connects to the closest point in each cone, but the distance measure is the standard Euclidean distance. This graph was shown to be connected by Damian and Kumbhar~\cite{damian2011undirected}. Their proof uses induction on a rhomboid distance-measure that was tailored specifically for the \YaoThree-graph. Since the `closest' point for the $\theta_3$-graph can be much further away than in the \YaoThree-graph, this method of induction does not translate to the $\theta_3$-graph, either. Conversely, we show that our proof extends to the \YaoThree-graph, providing an alternative proof for its connectivity.

\section{Properties of the $\boldsymbol{\theta_3}$-graph}

For $i\in \{0, 1, 2\}$, the edge connecting a point with its closest point in cone $C_i$ is called an \emph{$i$-edge}. Note that an edge can have one or two roles depending on the position of its endpoints. 
An example is depicted in \figref{fig:notstrong}, where edge $ab$ is both the $0$-edge of $a$ and the $1$-edge of $b$.

\begin{lemma}\label{lemma:Planar}
For all $i\in \{0, 1, 2\}$, no two $i$-edges of the $\theta_3$-graph can cross.
\end{lemma}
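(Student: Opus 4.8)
The plan is to prove this by contradiction, using the key geometric fact that determines which edges get added: an $i$-edge from a point $p$ goes to the point in $C_i^p$ whose projection onto the bisector is closest. First I would suppose, toward a contradiction, that two $i$-edges cross, say the $i$-edge $pq$ (from $p$ to its closest point $q$ in $C_i^p$) and the $i$-edge $rs$ (from $r$ to its closest point $s$ in $C_i^r$), and that segments $pq$ and $rs$ intersect at some point $x$. Since both edges are $i$-edges for the \emph{same} index $i$, all four cones $C_i^p, C_i^q, C_i^r, C_i^s$ are translates of one another: they are congruent cones opening in the same direction, bounded by two rays at the same pair of fixed slopes. This translation-invariance is the structural feature I want to exploit.

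The heart of the argument is a containment claim about these parallel cones. I would first observe that because $q \in C_i^p$ and the cone is convex and translation-invariant, the cone $C_i^q$ is contained in $C_i^p$ (shifting the apex deeper into the cone only shrinks the region). The crossing point $x$ lies on segment $pq$, hence $x \in C_i^p$, and more precisely $x$ lies ``between'' $p$ and $q$ along this cone. The next step is to locate $r$ and $s$ relative to these cones: since $x$ also lies on segment $rs$ with $s \in C_i^r$, the point $x$ is in $C_i^r$ as well, and $s$ lies further into $C_i^r$ than $x$. The goal is to derive that one of the two closest-point choices was not actually closest. Concretely, I would aim to show that $s$ lies in $C_i^p$ (so $p$ should have compared $s$ against $q$), or symmetrically that $q$ lies in $C_i^r$, and then compare projection distances to reach a contradiction with the minimality defining one of the edges.

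The main obstacle, and the step I expect to require the most care, is establishing the projection-distance inequality that produces the contradiction. It is not enough to know that a point lies in the relevant cone; I must compare the projections onto the bisector and argue that the ``wrong'' endpoint would have given a strictly smaller projected distance, contradicting the choice that defined one of the crossing edges. The delicate part is that the two edges emanate from different apexes $p$ and $r$, so the two bisectors, though parallel, are distinct lines, and the projection that $p$ uses differs from the one $r$ uses. I would handle this by tracking the order of the four points along the common cone direction (the direction of the bisector): if $pq$ and $rs$ cross, then along the bisector direction the endpoints must interleave in a way that forces, say, $s$ to project strictly closer to $p$ than $q$ does while lying in $C_i^p$. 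Setting up coordinates with the bisector along an axis should make this ordering argument clean, and the assumed general position (no two points on a line parallel to a cone boundary or perpendicular to a bisector) rules out the degenerate ties, yielding the strict inequality needed to contradict minimality.
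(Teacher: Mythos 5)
Your proposal is correct and follows essentially the same route as the paper's proof: assume two $i$-edges cross, use convexity and translation-invariance of the cones to conclude that the cone of the crossing point lies in both apexes' cones (so a far endpoint of one edge lies in the other edge's cone), and contradict the minimality defining one of the edges by a projection comparison. The only real difference is presentational: the step you flag as delicate is disposed of in the paper by a WLOG (name the endpoint with smaller $y$-coordinate $v_1$) together with the observation that for $0$-cones the bisector is vertical, so comparing projection distances from a common apex is just comparing $y$-coordinates---the fact that the two apexes use different (parallel) bisector lines never enters.
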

\begin{proof}
We consider only 0-edges of $P$; the proof is analogous for 1- and 2-edges. 
For a contradiction, assume that there are two 0-edges that cross at a point $s$. 
Call these edges $u_1 v_1$ and $u_2 v_2$, such that $v_1$ is in the \zerocone of $u_1$ and $v_2$ in the \zerocone of $u_2$. 
Assume without loss of generality that the $y$-coordinate of $v_1$ is smaller than that of $v_2$; see \figref{fig:Edges cannot cross} for an illustration.
Because $s$ lies on segments $u_1v_1$ and $u_2 v_2$, $s$ lies in the \zerocones of both $u_1$ and $u_2$.
Therefore, the \zerocone of $s$ is contained in the intersection of the \zerocones of $u_1$ and $u_2$. 
As $v_1$ lies in cone $C_0$ of $s$, point $v_1$ lies in cone $C_0$ of $u_2$ as well. 
Because we assumed that the $y$-coordinate of $v_1$ is less than that of $v_2$, we conclude that $v_1$ is closer to $u_2$ than $v_2$.
Thus, the edge $u_2 v_2$ is not a 0-edge, yielding a contradiction.
\end{proof}

\begin{figure}[h!]
\centering
\includegraphics{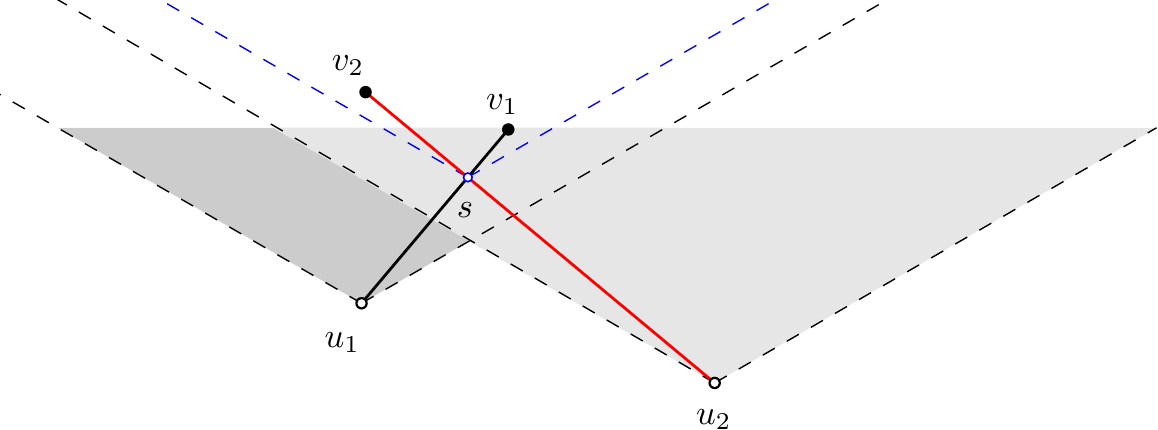}
\caption{\small Two 0-edges $u_1v_1$ and $u_2v_2$ such that $v_1\in C_0^{u_1}$ and $v_2\in v_1\in C_0^{u_2}$ cannot cross because the lowest point among $v_1$ and $v_2$ will be adjacent to both $u_1$ and $u_2$.}
\label{fig:Edges cannot cross}
\end{figure}

We say that a cone is \emph{empty} if it contains no point of $P$ in its interior.
A point having an empty \icone is called an \emph{\isink}.

Given a point $p$ of $P$, the \emph{$i$-path from $p$} is defined recursively as follows:
If the \icone of $p$ is empty, the $i$-path from $p$ consists of the single point $p$.
Otherwise, let $q$ be the closest point to $p$ in its \icone. 
The $i$-path from $p$ is defined as the union of edge $pq$ with the $i$-path from $q$.

\begin{lemma}\label{lemma:Forest}
Every $i$-path of the $\theta_3$-graph is well-defined and has an \isink at one of its endpoints.
\end{lemma}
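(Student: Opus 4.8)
The plan is to split the statement into two claims: that the recursion defining the $i$-path terminates (so that the path is well-defined and finite), and that the vertex at which it terminates is an \isink. Well-definedness at a \emph{single} step is immediate from the general position assumption, which guarantees that the closest point in the \icone of the current vertex is unique whenever that \icone is nonempty. Hence the only way the definition could fail is if the recursion continued forever, cycling through vertices without ever reaching one with an empty \icone.

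To rule this out I would introduce a monovariant. Let $u_i$ be the unit vector pointing along the angular bisector of cone $C_i$; because the cones at different apices are translates of one another, this direction does not depend on the apex. For a point $x$ set $f_i(x) = \langle x, u_i \rangle$, the coordinate of $x$ measured along $u_i$. The key observation is that if $q$ is the closest point to $p$ in $C_i^p$, then $q - p$ lies in a cone of half-angle $\pi/3$ about $u_i$; since $\pi/3 < \pi/2$ we obtain $\langle q - p, u_i \rangle > 0$, that is, $f_i(q) > f_i(p)$. Thus $f_i$ strictly increases along every edge of an $i$-path. As $P$ is finite, $f_i$ takes only finitely many values on $P$, so the recursion must halt after finitely many steps at a vertex whose \icone is empty; this establishes well-definedness. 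The strict monotonicity of $f_i$ additionally forces all vertices on the path to be distinct, so the $i$-path is a genuine (simple) path rather than merely a walk.

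It then remains only to observe that the terminal vertex $t$ is, by construction, the vertex at which the base case of the recursion applies, i.e.\ the vertex whose \icone contains no point of $P$. Thus $t$ is an \isink, and it is an endpoint of the path, as claimed. (The degenerate case in which $p$ itself is an \isink is covered automatically, the path then being the single vertex $p$.)

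I expect the only real obstacle to be choosing the correct monovariant. The natural quantities to induct on --- the Euclidean distance between consecutive vertices, or the distance to a fixed point --- can both increase when following an $i$-edge, as noted in the introduction, so they are useless here. Projection onto the bisector direction works precisely because its increase depends only on the next vertex lying \emph{inside} the cone, and not at all on how far away that vertex is.
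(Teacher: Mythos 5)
Your proposal is correct and is essentially the paper's own argument: the paper proves the case $i=0$ by observing that the $y$-coordinate (which is exactly your $f_0$, the projection onto the bisector of $C_0$) strictly increases along a $0$-path, and then concludes by finiteness of $P$; your monovariant $f_i(x)=\langle x,u_i\rangle$ is the same idea stated uniformly for all three cones instead of invoking symmetry.
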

\begin{proof}
We consider only 0-paths; the proof is analogous for the other paths.
A 0-path from a point $p$ is well defined because the closest point in the \zerocone of $p$ always lies above $p$.
Therefore, the $y$-coordinates of the points in the 0-path from $p$ form a monotonically increasing sequence.
As $P$ is a finite set, the recursion must end at a point having an empty \zerocone.
\end{proof}

\begin{figure}[hb]
\centering
\includegraphics{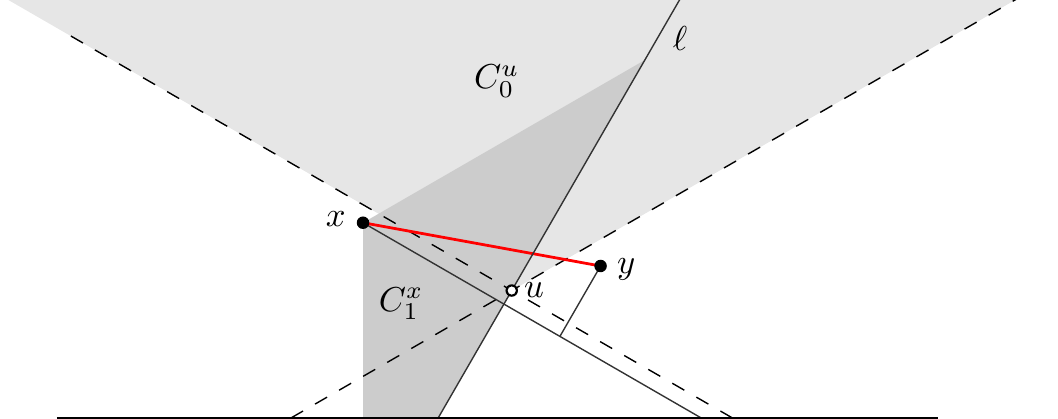}
\caption{\small Empty cones cannot be crossed by edges of the $\theta_3$-graph.}
\label{fig:Empty Cones Cannot be crossed}
\end{figure}

\begin{lemma}\label{lemma:Empty cones cannot be crossed}
If a cone of a point is empty, then no edge of the $\theta_3$-graph can cross this cone.
\end{lemma}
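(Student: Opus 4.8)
The plan is to argue by contradiction: I assume that some edge $uv$ crosses an empty cone $C_i^p$ and show that $uv$ cannot then be an edge at all. By the $120^\circ$ rotational symmetry of the construction I may take $i=0$. Since $C_0^p$ contains no point of $P$, both $u$ and $v$ lie outside it; and because $C_1^p$ and $C_2^p$ are each convex (a $120^\circ$ cone), a segment with both endpoints in the same lower cone stays inside that cone and never meets $C_0^p$. Hence the endpoints lie in different lower cones, and I label them so that $u\in C_1^p$ and $v\in C_2^p$. I then fix a point $s$ in the interior of $C_0^p$ lying on the segment, and write $s=u+\lambda(v-u)$ with $\lambda\in(0,1)$.

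The engine is the projection description of cones. I introduce the unit vectors $d_0,d_1,d_2$ along the three angular bisectors; they are pairwise $120^\circ$ apart and satisfy $d_0+d_1+d_2=0$. For a vector $w$ I write $\alpha_k(w)=\langle w,d_k\rangle$, so that $\alpha_0(w)+\alpha_1(w)+\alpha_2(w)=0$, the $\theta_3$-distance inside $C_k$ is measured by $\alpha_k$, and $w\in C_k$ precisely when $\alpha_k(w)$ is the largest of the three projections. In this language $u\in C_1^p$ says $\alpha_1(u)\ge\alpha_0(u),\alpha_2(u)$, the statement $v\in C_2^p$ says $\alpha_2(v)\ge\alpha_0(v),\alpha_1(v)$, and $s\in C_0^p$ gives in particular $\alpha_0(s)\ge\alpha_2(s)$.

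The key step is to prove that the cone of $u$ containing $v$ is exactly $C_2$, i.e. $v\in C_2^u$; this is the inequalities $\alpha_2(v-u)\ge\alpha_0(v-u)$ and $\alpha_2(v-u)\ge\alpha_1(v-u)$. The second is automatic, since its left side is $\alpha_2(v)-\alpha_1(v)\ge 0$ while its right side is $\alpha_2(u)-\alpha_1(u)\le 0$. For the first, set $A=\alpha_2(v-u)-\alpha_0(v-u)$. From $v\in C_2^p$ I get $A\ge\alpha_0(u)-\alpha_2(u)$, and expanding $\alpha_0(s)\ge\alpha_2(s)$ with $s=u+\lambda(v-u)$ gives $\alpha_0(u)-\alpha_2(u)\ge\lambda A$; chaining these yields $A\ge\lambda A$, and since $\lambda<1$ this forces $A\ge 0$. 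The same chain gives $\alpha_0(u)-\alpha_2(u)\ge\lambda A\ge 0$, which (together with $\alpha_1(u)\ge\alpha_2(u)$) shows $p\in C_2^u$ as well; moreover $p$ is strictly closer to $u$ than $v$ along $d_2$, because that comparison reduces to $\alpha_2(v)>0$, which holds for every $v\in C_2^p$ with $v\neq p$. Thus $v$ is not the closest point to $u$ in $C_2^u$, so $uv$ is not the $2$-edge of $u$. By the symmetric argument (exchanging $u,v$ and swapping $C_1,C_2$) I get $u\in C_1^v$ with $p$ closer to $v$ than $u$ in $C_1^v$, so $uv$ is not the $1$-edge of $v$. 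Since any edge must be the closest-point edge of one of its endpoints, and $C_2^u,C_1^v$ are the only cones in which $u$ and $v$ see each other, $uv$ is not an edge — the contradiction I want.

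I expect the main obstacle to be precisely the inequality $v\in C_2^u$. The picture makes it tempting to assert, but it genuinely fails for segments that merely join $C_1^p$ to $C_2^p$ without rising into $C_0^p$ (for example when $u$ sits deep in the lower part of $C_1^p$, where the chord stays below the boundary rays of $C_0^p$). So the crossing point $s$ and the inequality $\alpha_0(s)\ge\alpha_2(s)$ must be used in an essential way, and the small algebraic maneuver $A\ge\lambda A\Rightarrow A\ge 0$ is exactly what converts ``the segment reaches into $C_0^p$'' into ``$v$ lies in the $2$-cone of $u$.'' Everything else — the reduction to $i=0$, the convexity argument forcing $u,v$ into distinct lower cones, and the final closest-point contradiction — should be routine once this step is secured.
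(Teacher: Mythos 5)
Your proof is correct, and it reaches the lemma by a genuinely different route than the paper. The paper's argument (there the empty-cone point is called $u$ and the offending edge $xy$) is purely geometric: it first observes that for $xy$ to cross $C_0^u$ the endpoints must lie in the two opposite sectors of the double wedge obtained by extending the boundary rays of $C_0^u$, so that $x \in C_2^u$, $y \in C_1^u$, and both $u$ and $y$ lie in $C_1^x$; it then takes the line $\ell$ through $u$ perpendicular to the bisector of $C_1^x$ and notes that any point projecting onto that bisector closer to $x$ than $u$ does lies in the half-plane left of $\ell$, which is contained in $C_0^u \cup C_2^u$ --- contradicting $y \in C_1^u$. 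You instead encode the cones by the three bisector projections $\alpha_0,\alpha_1,\alpha_2$ (summing to zero, each cone being the region where one projection is maximal) and exploit the crossing point $s$ in the interior of the empty cone via the chaining step $A \ge \alpha_0(u)-\alpha_2(u) \ge \lambda A \Rightarrow A \ge 0$, which yields both $v \in C_2^u$ and that $p$ itself is strictly closer to $u$ than $v$ is; the mirrored computation does the same for the other endpoint. Both proofs pivot on the same underlying fact --- the empty-cone apex blocks the edge because it is closer than the far endpoint --- but your version buys two things: it needs only that the segment meets the interior of the empty cone (no ``opposite sectors of the double wedge'' positional analysis), and it explicitly rules out both possible roles of the edge (the $2$-edge of $u$ and the $1$-edge of $v$), whereas the paper's sentence ``for the edge $xy$ to exist, the projection of $y$ must be closer to $x$ than that of $u$'' literally covers only the case that $xy$ is the $1$-edge of $x$, leaving the mirror case to an implicit symmetry. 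What the paper's route buys is brevity and a picture: one perpendicular line and one half-plane containment instead of linear-functional bookkeeping. Two cosmetic repairs for your write-up: state once that $p$ is taken as the origin, so that $\alpha_k(u)$ means $\alpha_k(u-p)$; and when you assert that $C_2^u$ and $C_1^v$ are the \emph{only} cones in which the endpoints see each other, invoke the general-position assumption to place $v$ in the interior of exactly one cone of $u$ (and symmetrically), since your inequalities are non-strict.
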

\begin{proof}
We consider only 0-cones for this proof; analogous arguments hold for the other cones.
Let $u$ be a point of $P$ with an empty 0-cone.
For a contradiction,
assume that there exists an edge $x y$ that crosses $C_0^u$. 
For this to happen, $x$ and $y$ have to lie in opposite sectors of the double wedge obtained by extending the boundary segments of $C_0^u$; see \figref{fig:Empty Cones Cannot be crossed}.
Assume without loss of generality that $x$ lies in the left wedge. Then $x$ lies in $C_2^u$ while $y$ lies in $C_1^u$.
In particular, this implies that both $u$ and $y$ lie in $C_1^x$.

Let $\ell$ be the line through $u$ perpendicular to the bisector of $C_1^x$. 
For the edge $x y$ to exist, the projection of $y$ on the bisector of $C_1^x$ must be closer to $x$ than the projection of $u$. In other words, $y$ must lie to the left of $\ell$.
However, all points lying to the left of $\ell$ are contained in $C_0^u \cup C_2^u$, yielding a contradiction as $y \in C_1^u$.
\end{proof}

\begin{figure}[t]
\centering
\includegraphics{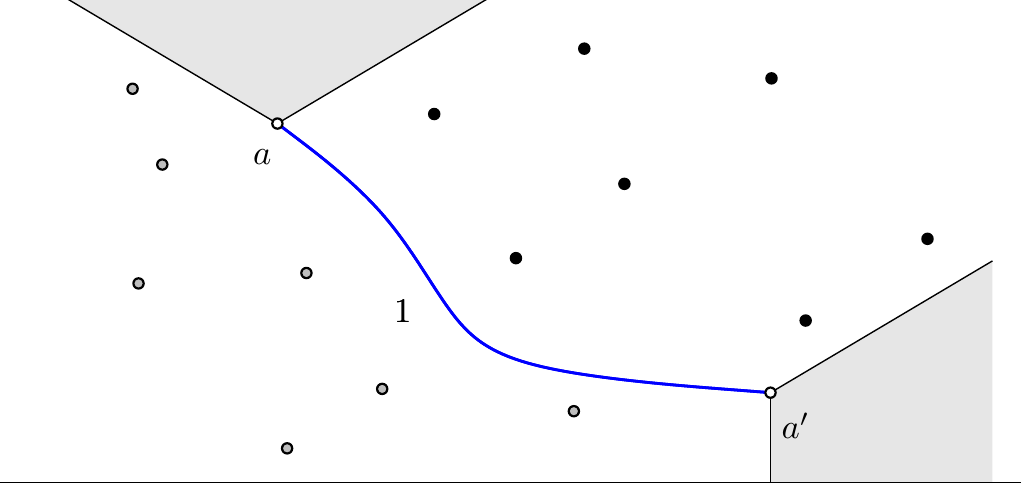}
\caption{\small A 1-barrier, defined by the 1-path joining $a$ with $a'$, splits the remaining points into two sets such that no two points in different sets can be joined by a 1-path.}
\label{fig:Barriers}
\vspace{-1em} 
\end{figure}

As a consequence of Lemmas~\ref{lemma:Planar} and \ref{lemma:Empty cones cannot be crossed}, two sinks connected by an $i$-path partition the remaining points into two sets such that no $i$-path can connect a point in one set to a point in the other set, as any such path would cross either the $i$-path between the sinks, or the empty cone of one of the sinks. 
Such a construction is called an \emph{$i$-barrier}; see \figref{fig:Barriers} for an illustration.

\section{Proving connectedness}
In this section we prove that the $\theta_3$-graph of any given point set is connected.
We start by proving that three given \zerosinks in a specific configuration are always connected. 
We then prove that  if the $\theta_3$-graph has at least two disjoint connected components, there exist three 0-sinks that are in this configuration and are not all in the same component, leading to a contradiction. 

Although the edges of the $\theta_3$-graph are not directed, by Lemma~\ref{lemma:Forest} we can think of an $i$-path as oriented towards the \isink it reaches.
An $i$-path from $a$ that ends at an $i$-sink $b$ is denoted by $a \pathto b$.
The following lemma is depicted in \figref{fig:Zero Sink configuration}.

\begin{figure}[b]
\centering
\includegraphics{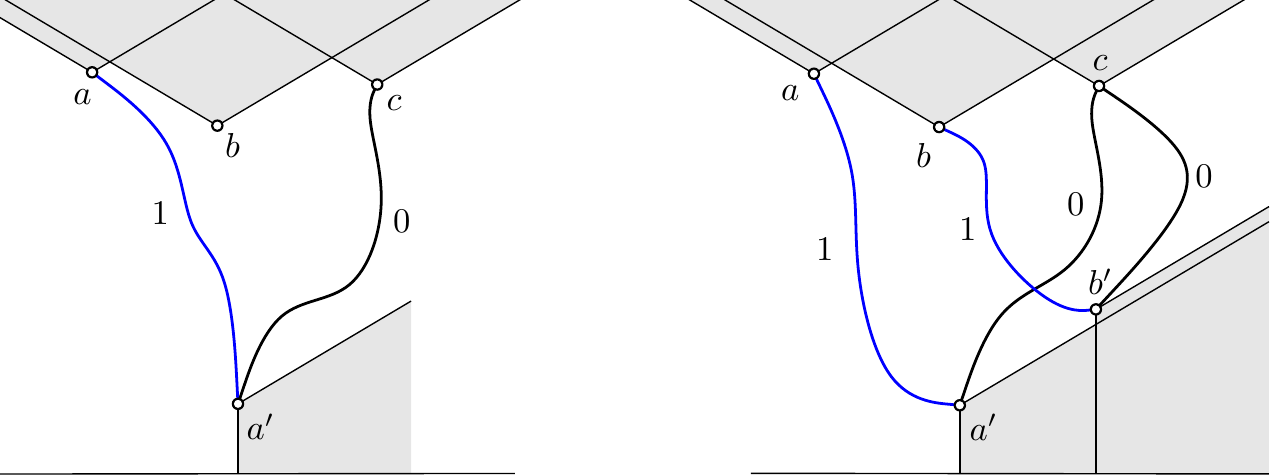}
\caption{\small Left: The configuration of points described in Lemma~\ref{lemma:0-sink configuration}.
Right: The configuration in the base case of the induction where no 0-sink lies to the right of $c$. 
}
\label{fig:Zero Sink configuration}
\end{figure}

\begin{lemma}\label{lemma:0-sink configuration}
Let $a$, $b$, and $c$ be three 0-sinks such that (i) $a$ lies to the left of $b$ and $b$ lies to the left of $c$, and (ii) the 1-path from $a$ ends at a 1-sink $a'$ whose 0-path ends at $c$ ($a'$ may be equal to $c$). Then, $a$, $b$, and $c$ belong to the same connected component.
\end{lemma}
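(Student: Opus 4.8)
The plan is to first dispose of the easy half of the statement and then reduce everything to connecting $b$. By hypothesis (ii), the $1$-path $a \pathto a'$ and the $0$-path $a' \pathto c$ are both sequences of graph edges (by Lemma~\ref{lemma:Forest} they are well defined and terminate at the claimed sinks), so $a$, $a'$ and $c$ already lie in one connected component; write $\sim$ for ``in the same component''. Thus it suffices to show $b \sim c$. The natural route out of $b$ is to follow its $1$-path to a \onesink $b'$ and then the $0$-path from $b'$ up to a \zerosink $d$; by Lemma~\ref{lemma:Forest} both sub-paths exist, so $b \sim b' \sim d$. The whole problem then becomes: identify where $d$ can land.

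I would run an induction on the number of \zerosinks lying strictly to the right of $c$, matching the base case suggested in \figref{fig:Zero Sink configuration} (right), namely that no \zerosink lies to the right of $c$ (note the general-position assumption forbids two points on a vertical line, so ``to the right of'' is strict and well defined). The key dichotomy I will aim to prove geometrically is that \emph{either $d = c$, or $d$ is a \zerosink strictly to the right of $c$} --- the route from $b$ can never terminate at a \zerosink to the left of $c$. Granting this, the induction closes cleanly: in the base case the second alternative is vacuous, so $d = c$ and $b \sim c$; in the inductive step, if $d \ne c$ then $x_b < x_c < x_d$ and the triple $(b, c, d)$ again satisfies the hypotheses of the lemma, with the hook $b \pathto b' \pathto d$ playing the role of $a \pathto a' \pathto c$, while the number of \zerosinks to the right of its rightmost element $d$ is strictly smaller than the number to the right of $c$. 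The induction hypothesis then gives $b \sim c$, and combined with $a \sim c$ we are done.

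The heart of the argument --- and the step I expect to be the main obstacle --- is the geometric dichotomy for $d$. I would build a separating curve from the empty \zerocone above $a$, the hook $a \pathto a' \pathto c$, and the empty \zerocone above $c$, and first argue that, since $x_a < x_b < x_c$ and $b$ is a \zerosink, $b$ lies in the region enclosed between the two upward cones and above the hook. One then tracks $b \pathto b' \pathto d$ against this curve: crossings of like-typed edges are forbidden by Lemma~\ref{lemma:Planar}, and no edge can pierce an empty cone by Lemma~\ref{lemma:Empty cones cannot be crossed}, and together these block any exit to the \emph{left} of the hook, which is why $d$ cannot land left of $c$. If the route never leaves the enclosed region, its $0$-part must meet the $0$-path $a' \pathto c$; since $0$-paths form a forest, the two coincide from the meeting point onward, forcing $d = c$. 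Otherwise the route escapes rightward past $c$ and lands at a \zerosink with $x_d > x_c$.

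The subtlety I will have to handle with care is that Lemma~\ref{lemma:Planar} forbids only crossings between two $0$-edges or two $1$-edges, \emph{not} a $1$-edge crossing a $0$-edge; so planarity alone does not stop the $1$-path from $b$ from slicing through the ascending prong $a' \pathto c$. This is exactly where the empty \onecone of the \onesink $a'$ must enter: it seals the bottom of the enclosed region and, together with the empty \zerocone of $c$, confines the only possible escape to the right of $c$. Making the containment of $b$ precise and pushing through these mixed-type non-crossing arguments (including the boundary positions permitted under general position) is the bulk of the work; once the left-exit is excluded, the rest is the bookkeeping of the induction above.
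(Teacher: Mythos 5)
Your proposal follows essentially the same route as the paper's own proof: induction on the number of \zerosinks to the right of $c$, following $b$'s 1-path to a \onesink $b'$ and then $b'$'s 0-path to a \zerosink $d$, using the 1-barrier $a \pathto a'$ and the 0-barrier $a' \pathto c$ (i.e., Lemmas~\ref{lemma:Planar} and~\ref{lemma:Empty cones cannot be crossed}) to establish the dichotomy that $d = c$ or $d$ lies strictly right of $c$, and then applying the induction hypothesis to the triple $(b, c, d)$. The only details the paper makes explicit that you leave implicit are the degenerate case $a' = c$ in the base case (where $a'$ being both a \zerosink and a \onesink forces $b' = a'$) and the step showing $b'$ lies right of the 0-barrier, but these fall within the geometric bookkeeping you correctly identify as the remaining work.
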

\begin{proof}
Because there is a path from $a$ to $c$ via $a'$, $a$ and~$c$ must be in the same component. We show that $b$ belongs to this same connected component. 

The proof proceeds by induction on the number of 0-sinks to the right of $c$.
In the base case, there are no 0-sinks to the right of~$c$.
Consider the 1-sink $b'$ at the end of the 1-path from $b$; see \figref{fig:Zero Sink configuration} (right).
Because the 1-path $a \pathto a'$ forms a 1-barrier, $b'$ cannot lie to the left of $a'$. 

If $a' = c$, then $a'$ is both a 1-sink and a 0-sink.
This means that there can be no points to the right of $a'$.
Therefore $b'$ must also be equal to $a'$.
But then $b$ is in the same connected component as $a$ and we are done.
So assume that this is not the case, that is, $a'\neq c$ and $b'$ lies \mbox{to the right of $a'$}.

Then the 1-path $b\pathto b'$ also has to cross the 0-path $a' \pathto c$, as otherwise $a' \pathto c$ crosses the empty cone of $b'$, which is impossible by Lemma~\ref{lemma:Empty cones cannot be crossed}, or $b'$ lies on $a' \pathto c$ and we are done.
Moreover, because $a' \pathto c$ forms a 0-barrier, the 0-path from $b'$ cannot end to the left of $c$.
However, since there are no 0-sinks to the right of $c$, the 0-path from $b'$ must end at $c$.
Thus, there is a path connecting $b$ and $c$, which proves the lemma in the base case.

\begin{figure}[ht]
\centering
\includegraphics{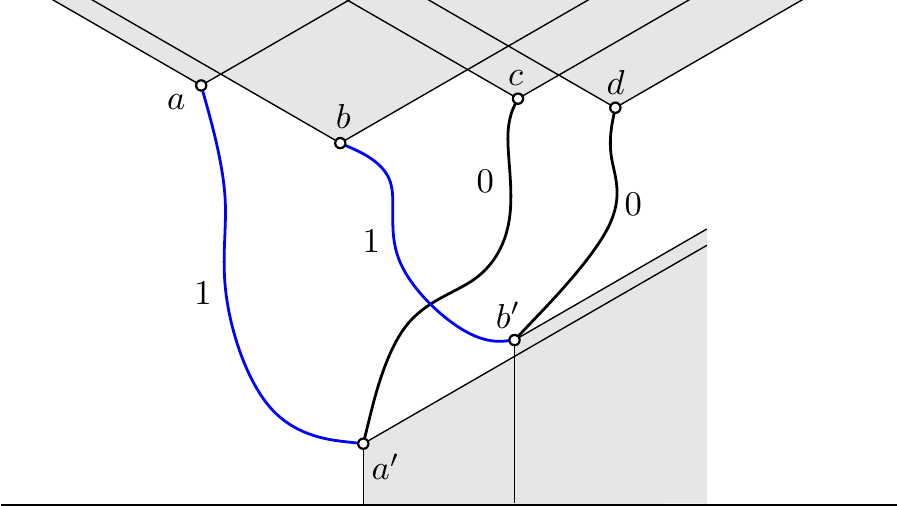}
\caption{\small The configuration of the inductive step where the induction hypothesis can be applied on 0-sinks $b$, $c$ and $d$.}
\label{fig:ZeroSink Inductive step}
\end{figure}

For the inductive step, let $k$ be the number of 0-sinks to the right of $c$ and assume that the lemma holds for any triple of 0-sinks with fewer than $k$ 0-sinks to their right. By the same argument as in the base case, we have a 1-path from $b$ to a 1-sink $b'$ that lies to the right of $a'$. Now consider the 0-sink $d$ at the end of the 0-path from $b'$; see \figref{fig:ZeroSink Inductive step}. Note that $b'$ and $d$ could be the same vertex.

Since the 0-path $a' \pathto c$ forms a 0-barrier, $d$ cannot lie to the left of $c$. 
If $d$ and $c$ are the same point, we have a path connecting $b$ and $c$ as in the base case, so assume that this is not the case. Thus $d$ lies to the right of $c$. Now $b$, $c$, and $d$ form a triple of 0-sinks that satisfy criteria (i) and (ii). And since $d$ is a 0-sink to the right of $c$, there are fewer than $k$ 0-sinks to the right of $d$. Thus, by induction, we have that $b$ is in the same connected component as $c$, which proves the lemma.
\end{proof}

\begin{theorem}\label{thm:Theta 3 is connected}
The $\theta_3$-graph is connected.
\end{theorem}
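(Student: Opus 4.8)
The plan is to argue by contradiction: suppose the $\theta_3$-graph has at least two connected components. By Lemma~\ref{lemma:Forest}, following the $0$-path from any vertex reaches a \zerosink without ever leaving the vertex's component, so every component contains at least one \zerosink; in particular there are \zerosinks in at least two different components. The general-position assumption forbids two points on a vertical line (a line parallel to the common boundary of $C_1$ and $C_2$), so distinct \zerosinks have distinct $x$-coordinates and I may list all of them as $s_1,\dots,s_n$ from left to right. Recall also the two orientation facts I will use throughout: $0$-edges strictly increase the $y$-coordinate (as in Lemma~\ref{lemma:Forest}), and since the $1$-cone lies strictly to the right, $1$-edges strictly increase the $x$-coordinate.

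Next I would run a left-to-right sweep that repeatedly applies Lemma~\ref{lemma:0-sink configuration}. Let $A$ be the component of the leftmost \zerosink $s_1$, and let $e=s_i$ be the rightmost \zerosink such that all of $s_1,\dots,s_i$ lie in $A$. Since the graph is disconnected, not every \zerosink lies in $A$, so $e$ is not the last one and the next \zerosink $g=s_{i+1}$ must satisfy $g\notin A$ (otherwise $e$ would not be rightmost); thus $g$ is the leftmost foreign \zerosink and no \zerosink lies strictly between $e$ and $g$. Now take the $1$-path from $e$, which ends at a \onesink $e'$ (to the right of $e$), and then the $0$-path from $e'$, which ends at a \zerosink $d\in A$. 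Because $d\neq g$ and no \zerosink lies strictly between $e$ and $g$, either $d$ lies to the left of or equals $e$, or $d$ lies to the right of $g$. In the latter case the triple $(e,g,d)$ satisfies conditions (i) and (ii) of Lemma~\ref{lemma:0-sink configuration} with $e\pathto e'\pathto d$, forcing $g\in A$ — a contradiction. Hence the sweep can only get stuck when $d$ lies at or to the left of $e$, i.e.\ when the chain from $e$ \emph{doubles back} to the left.

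The main obstacle is exactly this doubling-back case: $e\in A$ is a \zerosink whose $1$-path reaches a \onesink $e'$ to its right, but whose associated \zerosink $d$ lies no farther right than $e$, while the foreign \zerosink $g$ sits immediately to the right of $e$. Here Lemma~\ref{lemma:0-sink configuration} does not apply directly, and I expect to resolve it with the barrier machinery. Since $C_0^g$ is empty, neither $e\pathto e'$ nor $e'\pathto d$ can cross it by Lemma~\ref{lemma:Empty cones cannot be crossed}, which pins these $A$-paths below $g$; combined with the planarity of like-type edges (Lemma~\ref{lemma:Planar}), this should confine the $1$- and $0$-paths emanating from $g$ and force some right-going chain — from either component — to straddle a \zerosink of the other component, contradicting Lemma~\ref{lemma:0-sink configuration} once more. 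I would also exploit the reflection symmetry of the construction across the vertical axis, which fixes $C_0$ and swaps $C_1$ with $C_2$, and therefore yields a mirror image of Lemma~\ref{lemma:0-sink configuration} for $2$-paths under the reversed left-to-right order; applying this mirrored statement from the right should let me corner $g$ symmetrically when the $A$-chain escapes to the left. Making this trapping argument precise — showing that a doubling-back chain cannot avoid placing a foreign \zerosink strictly inside the $x$-interval it spans — is the crux of the proof.
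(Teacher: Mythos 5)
Your reduction is sound and matches the paper's opening moves: every component contains a \zerosink, the triple consisting of an $A$-sink, the leftmost foreign sink $g$, and the endpoint $d$ of the chain (1-path, then 0-path) is fed to Lemma~\ref{lemma:0-sink configuration}, and the only obstruction is the case where $d$ ends up at or to the left of your starting sink. But that doubling-back case is a genuine gap, and the mechanism you sketch for closing it would not work. The cone $C_0^g$ opens \emph{upward}, so Lemma~\ref{lemma:Empty cones cannot be crossed} only keeps paths out of the region above $g$; it does nothing to stop the chain $e \pathto e' \pathto d$ from passing \emph{below} $g$ and terminating at or left of $e$, which is exactly where doubling back happens. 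Likewise, Lemma~\ref{lemma:Planar} only forbids crossings between edges of the \emph{same} type, so a 1-path may legitimately cross 0-paths and 2-paths, and no ``confinement by planarity'' of the paths emanating from $g$ is available. Nothing in your sketch creates an obstacle in the half-plane below $g$, so the trapping argument you defer is not just imprecise --- its ingredients are insufficient.

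The paper removes the doubling-back case altogether by erecting barriers \emph{anchored at the foreign sink} before launching any chain from $A$. With $a$ the leftmost \zerosink and $b$ your $g$: since $a$ lies left of $b$ and can lie in neither $C_0^b$ (which is empty) nor $C_1^b$ (which lies entirely to the right of $b$), we get $a \in C_2^b$, so the 2-path from $b$ is nontrivial and ends at a 2-sink $b'$; this path heads down-left, i.e., into precisely the region a doubling-back chain would have to traverse. Next, the 0-path from $b'$ ends at a \zerosink $c$ which, lying in $b$'s component, must be $b$ itself or to its right. Now the \onesink $a'$ at the end of the 1-path from $a$ must lie to the right of \emph{both} barriers, since otherwise they would cross the empty cone $C_1^{a'}$, violating Lemma~\ref{lemma:Empty cones cannot be crossed}; and the 0-path from $a'$ can never cross the 0-barrier $b' \pathto c$ (Lemmas~\ref{lemma:Planar} and~\ref{lemma:Empty cones cannot be crossed}), so its endpoint $d$ lies at or to the right of $c$, hence never left of $b$. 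Then either $d = c$, which instantly puts $b$'s component inside $A$, or $(a,b,d)$ satisfies Lemma~\ref{lemma:0-sink configuration}; either way the contradiction is immediate, with no case analysis on doubling back and no mirrored 2-path version of Lemma~\ref{lemma:0-sink configuration}. The one idea your proposal is missing is this: the blocking structure must be \emph{paths growing out of the foreign sink itself} --- above all its 2-path --- not the foreign sink's empty cone; whether the chain is then launched from $s_1$ or from your $e$ is immaterial.
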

\begin{proof}
Assume for a contradiction that there exists a point set $P$ whose $\theta_3$-graph $G$ is not connected. 
From each point, we can follow its 0-path to a 0-sink. Therefore, $G$ must contain at least one 0-sink for each connected component. 
Let $a$ be the leftmost 0-sink, and let $A$ be the connected component of $G$ that contains $a$.
Now let $b$ be the leftmost 0-sink that does not belong to $A$.

We use Lemma~\ref{lemma:0-sink configuration} to show that, in fact, $b$ must belong to $A$ as well. 
Before we can do this, we need to define two barriers. 
The first barrier is formed by the 2-path from $b$, ending at a 2-sink $b'$. Because $a$ lies in $C_2^b$, point~$b$ does not have an empty 2-cone and hence, $b'$ differs from $b$. The second barrier is formed by the 0-path from $b'$, which ends at a 0-sink $c$; see \figref{fig:Connectedness}. Since $b$ is the leftmost 0-sink that does not belong to $A$, either $c$ and~$b$ are the same point, or $c$ lies to the right of $b$.

Now consider the 1-sink $a'$ at the end of the 1-path from $a$. 
This point has to lie to the right of both barriers $b\pathto b'$ and $b'\pathto c$, as otherwise these paths would cross the empty cone $C_1$ of $a'$, which is not allowed by Lemma~\ref{lemma:Empty cones cannot be crossed}. 
Because the path $a\pathto a'$ is a 1-path and the barriers in question consist of 0- and 2-edges, these crossings are possible. 
Now let $d$ be the 0-sink at the end of the 0-path from $a'$. 
Since this path cannot cross the 0-barrier $b'\pathto c$, $d$ cannot lie to the left of $c$. 

\begin{figure}[t]
\centering
\includegraphics{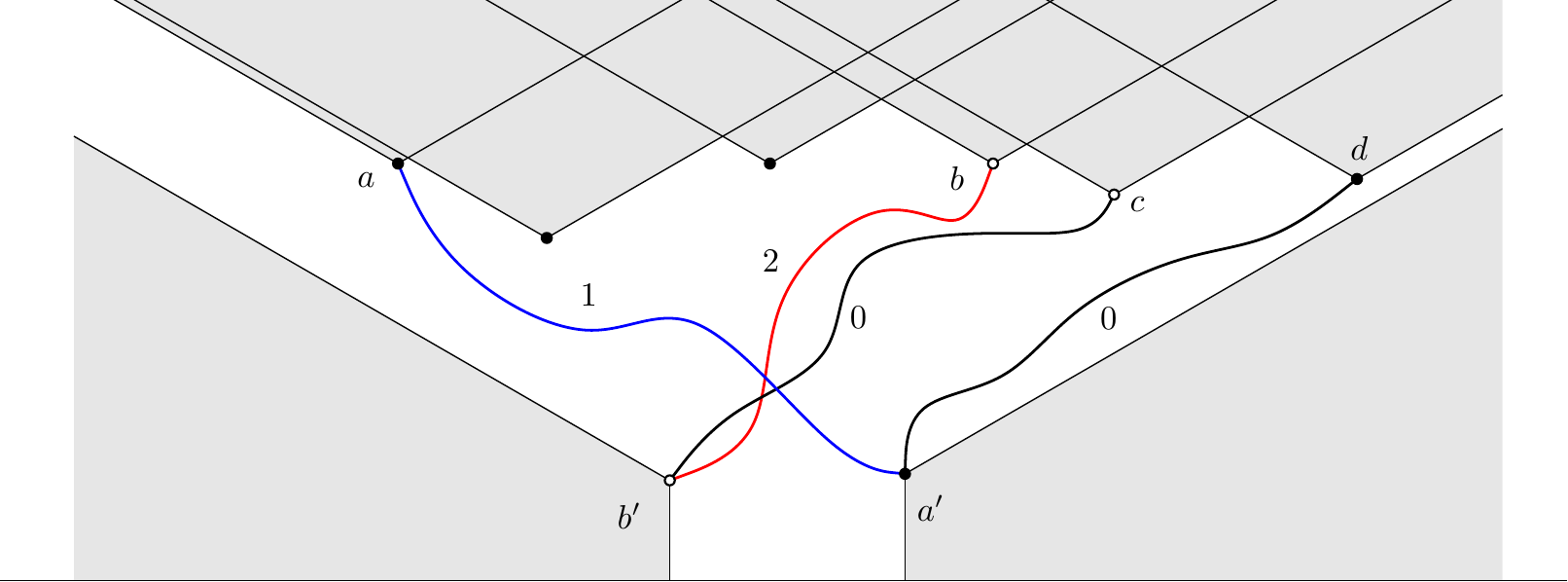}
\caption{\small Two \zerosinks $a$ and $b$ are assumed to lie in different components such that both $a$ and $b$ are the leftmost 0-sinks in their component.
The 1-path from $a$ ends at a 1-sink $a'$ whose 0-path ends at a \zerosink $d$ lying to the right of $b$. The 0-sinks $a, b$ and $d$ jointly satisfy the criteria of Lemma~\ref{lemma:0-sink configuration}.}
\label{fig:Connectedness}
\end{figure}

Because $d$ belongs to component $A$, if $c$ and $d$ are the same point, $c$ belongs to component $A$. Otherwise, if $c$ and $d$ are distinct points, then $a$, $b$, and $d$ jointly satisfy the criteria of Lemma~\ref{lemma:0-sink configuration}, which gives us that $b$ belongs to component $A$ as well---a contradiction since~$b$ is the leftmost \zerosink that does not belong to $A$.
This contradiction comes from our assumption that $G$ is not connected.
Therefore, the $\theta_3$-graph of any point set is connected.
\end{proof}

\section{The $\boldsymbol{\YaoThree}$-graph}\label{sec:Yao}
The construction of the \YaoThree-graph is very similar to that of the $\theta_3$-graph. The only difference is the way distance is measured: the $\theta$-graph uses the length of the projection onto the bisector, whereas the Yao graph uses the Euclidean distance. Therefore, in every cone a point is connected to its closest Euclidean neighbor. We denote by $|pq|$ the Euclidean distance between two points $p$ and $q$.

We show that, like the $\theta_3$-graph, the \YaoThree-graph is connected. To this end, we re-introduce the three basic lemmas we had for the $\theta_3$-graph and show that the same properties hold for the \YaoThree-graph. 
We first prove a geometric auxiliary lemma depicted in \figref{fig:Yao3Aux}.
  \begin{figure}[t]
    \centering
    \includegraphics{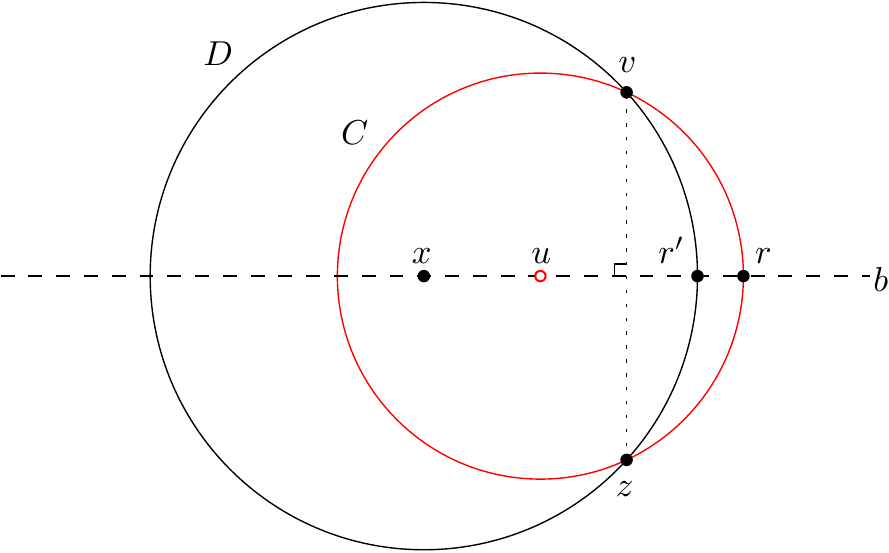}
    \caption{\small Point $x$ lies to the left of point $u$ and the arcs $v r'$ and $r' z$ are enclosed by circle $C$ centered at $u$, having radius $|uv|$.}
    \label{fig:Yao3Aux}
  \end{figure}

\pagebreak
\begin{lemma}
  \label{lemma:Yao3Aux}
Given a non-vertical line $b$ and a circle $C$ centered at a point $u$ on $b$,
 let $v$ and $z$ be two points on $C$ such that $b$ bisects the segment $vz$.
Let $x$ be a point on $b$ and let $D$ be the circle centered at $x$ with radius~$|x v|$. 
If $x$ lies to the left of $u$, then the right-side arc of $D$ between $v$ and $z$
 is enclosed by $C$; otherwise, the left-side arc of $D$ between $v$ and $z$
 is enclosed by $C$. 
\end{lemma}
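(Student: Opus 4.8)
The plan is to reduce the statement to a one-line sign computation via the power of a point. First I would fix convenient coordinates: since $b$ is non-vertical, I can take $b$ to be the $x$-axis, so that ``left'' and ``right'' mean smaller and larger $x$-coordinate, and write $u=(u_0,0)$ and $x=(x_0,0)$, with ``$x$ to the left of $u$'' meaning $x_0<u_0$. The first point to nail down is that $b$ is in fact the perpendicular bisector of $vz$: since $v,z\in C$ we have $|uv|=|uz|$, so $u$ lies on the perpendicular bisector of $vz$; as $u$ also lies on $b$, and $b$ passes through the midpoint of $vz$, the line $b$ must coincide with this perpendicular bisector. Consequently $v$ and $z$ are mirror images across $b$, say $v=(p,q)$ and $z=(p,-q)$, and every point of $b$ is equidistant from $v$ and $z$. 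In particular $|xz|=|xv|$, so $D$ passes through both $v$ and $z$ (as does $C$), the chord $vz$ lies on the vertical line $X=p$, and $C$ and $D$ meet in exactly these two points.

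Next I would compare the two circles using powers of a point. For an arbitrary point $P=(X,Y)$, let $\mathrm{pow}_C(P)=|Pu|^2-R^2$ and $\mathrm{pow}_D(P)=|Px|^2-r^2$ be its powers with respect to $C$ and $D$, where $R=|uv|$ and $r=|xv|$. Substituting $R^2=(p-u_0)^2+q^2$ and $r^2=(p-x_0)^2+q^2$ and expanding, the quadratic terms in $X$ and $Y$ cancel and everything collapses to the linear expression
\[
\mathrm{pow}_C(P)-\mathrm{pow}_D(P)=2(x_0-u_0)(X-p).
\]
This is of course just the statement that the radical axis of $C$ and $D$ is the line $X=p$ carrying the chord $vz$. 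The routine algebra behind this factorization is the only computation in the argument, and it is short.

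Finally I would read off the conclusion. A point $P$ on $D$ satisfies $\mathrm{pow}_D(P)=0$, so the identity gives $\mathrm{pow}_C(P)=2(x_0-u_0)(X-p)$, and $P$ lies inside or on $C$ exactly when $\mathrm{pow}_C(P)\le 0$, i.e.\ when $(x_0-u_0)(X-p)\le 0$. If $x$ lies to the left of $u$, then $x_0-u_0<0$, so this holds precisely for the points of $D$ with $X\ge p$, which is exactly the right-side arc of $D$ between $v$ and $z$ (the portion of $D$ in the half-plane to the right of the chord $vz$; it meets $b$ at the rightmost point $r'$ of $D$). The symmetric choice of signs handles the case where $x$ lies to the right of $u$, giving the left-side arc.

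I expect the only genuine care to lie in the bookkeeping rather than in any hard idea: one must justify that $b$ really is the perpendicular bisector of $vz$ (so that $D$ passes through $z$ and the two arcs of $D$ between $v$ and $z$ are well defined), and one must match the algebraic condition $X\ge p$ with the geometric phrase ``right-side arc.'' The degenerate case $x=u$, where $C=D$, is trivial and can be excluded at the outset.
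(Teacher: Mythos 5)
Your proof is correct, and it takes a genuinely different route from the paper's. The paper argues synthetically: since $C$ and $D$ are distinct circles meeting only at $v$ and $z$, the right-side arc of $D$ between $v$ and $z$ avoids $C$ and is connected, hence lies entirely inside or entirely outside $C$; it then suffices to test a single point, the right intersection $r'$ of $D$ with $b$, for which the paper checks $|ur'| = |xr'| - |xu| = |xv| - |xu| \leq |uv| = |ur|$ via the triangle inequality. Your radical-axis identity $\mathrm{pow}_C(P) - \mathrm{pow}_D(P) = 2(x_0 - u_0)(X - p)$ replaces both steps: it determines the sign of $\mathrm{pow}_C$ at every point of $D$ simultaneously, so no connectedness argument and no test point are needed. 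Each approach has its merits. The paper's is coordinate-free and needs nothing beyond the triangle inequality, but its rewriting $|ur'| = |xr'| - |xu|$ silently assumes that $u$ lies between $x$ and $r'$, which can fail: if $v$ is near the leftmost point of $C$ and $x$ is far to the left of $u$, then $|xv| < |xu|$ and $r'$ lies to the left of $u$; the repair is the reverse triangle inequality $\bigl|\, |xv| - |xu| \,\bigr| \leq |uv|$. Your computation handles all positions of $x$ and $v$ uniformly, and it also exposes the structural reason the lemma is true: the chord $vz$ is precisely the radical axis of $C$ and $D$.

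One caveat on your preliminary step. You derive that $b$ is the perpendicular bisector of $vz$ by noting that $b$ passes through two points of that bisector, namely $u$ and the midpoint of $vz$; this requires those two points to be distinct. If $u$ is itself the midpoint (i.e., $vz$ is a diameter of $C$), the argument collapses: $b$ need not be perpendicular to $vz$, and then $D$ need not pass through $z$, so the ``arcs of $D$ between $v$ and $z$'' are not even well defined. This degenerate case shows that the hypothesis ``$b$ bisects the segment $vz$'' must be read as ``$v$ and $z$ are reflections of each other in $b$'' --- which is exactly how the lemma is invoked in the paper (there $z$ is constructed as the reflection of $v$ in $b$) and what the paper's own proof implicitly assumes. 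It is cleaner to take that as the stated hypothesis than to try to derive it.
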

\begin{proof}
Assume that $x$ lies to the left of $u$; the proof of the other case is analogous. 
Let $r$ and $r'$ be the respective right intersections of $C$ and $D$ with line $b$; see \figref{fig:Yao3Aux}. 
Hence, arcs $v r'$ and $r' z$ lie either entirely inside $C$ or entirely outside $C$. Therefore, it suffices to show that $r'$ lies inside $C$, i.e., $|u r'| \leq |u r|$. Since $x$ lies to the left of $u$, we can rewrite $|u r'|$ as $|x r'| - |x u|$. Since $|x r'| = |x v|$ and $|u r| = |u v|$, we thus need to show that $|x v| \leq |x u| + |u v|$. This follows from the triangle inequality. 
\end{proof}

The proof of the following lemma is similar to that of Lemma~\ref{lemma:Planar}.

\begin{lemma}
  \label{lemma:Yao3Planar}
For all $i\in \{0, 1, 2\}$, no two $i$-edges of the \YaoThree-graph can cross.
\end{lemma}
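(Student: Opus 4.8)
The plan is to follow the proof of Lemma~\ref{lemma:Planar} verbatim for its first, purely geometric half, and then to replace its final distance comparison—which exploited the vertical-projection metric—by an argument suited to the Euclidean metric. As in that proof, it suffices to treat $0$-edges, the three cones being rotations of one another. So I would assume, for a contradiction, that two distinct $0$-edges $u_1v_1$ and $u_2v_2$ of the \YaoThree-graph cross at an interior point $s$, with $v_1\in C_0^{u_1}$ and $v_2\in C_0^{u_2}$.

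The first step is identical to Lemma~\ref{lemma:Planar}. Since $s$ lies on both segments and each $v_j$ lies above its apex, $s\in C_0^{u_1}\cap C_0^{u_2}$. Because $C_0$ is a convex cone, translating its apex along a direction interior to the cone yields a subcone, so $s\in C_0^{u_j}$ gives $C_0^s\subseteq C_0^{u_j}$ for each $j$, hence $C_0^s\subseteq C_0^{u_1}\cap C_0^{u_2}$. As $v_1-s$ and $v_2-s$ are positive multiples of the cone directions $v_1-u_1$ and $v_2-u_2$, both $v_1$ and $v_2$ lie in $C_0^s$, and therefore in $C_0^{u_1}\cap C_0^{u_2}$; in particular $v_2\in C_0^{u_1}$ and $v_1\in C_0^{u_2}$.

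The crux, and the place where the $\theta_3$ argument breaks down, is the distance comparison: for the projection metric the lower of $v_1,v_2$ is automatically closer to both apices, but this is false for the Euclidean metric. Instead I would invoke the closest-point property in both directions simultaneously. Since $v_1$ is the Euclidean-closest point of $P$ in $C_0^{u_1}$ and $v_2\in C_0^{u_1}$, we have $|u_1v_1|\le|u_1v_2|$, and symmetrically $|u_2v_2|\le|u_2v_1|$, so that
\[
|u_1v_1|+|u_2v_2|\ \le\ |u_1v_2|+|u_2v_1|.
\]
On the other hand, because the two edges cross at the interior point $s$, the triangle inequality applied to the triangles $u_1sv_2$ and $u_2sv_1$ gives $|u_1s|+|sv_2|>|u_1v_2|$ and $|u_2s|+|sv_1|>|u_2v_1|$; summing these and using $|u_1s|+|sv_1|=|u_1v_1|$ together with $|u_2s|+|sv_2|=|u_2v_2|$ yields the reversed strict inequality $|u_1v_1|+|u_2v_2|>|u_1v_2|+|u_2v_1|$. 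The two displayed relations contradict each other, which finishes the argument.

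I expect the main obstacle to be exactly this exchange of ideas: noticing that the single comparison used for $\theta_3$ no longer holds and must be replaced by the two-sided closest-point inequality paired with the crossing-diagonals (convex-quadrilateral) inequality. The only point requiring care is strictness of the triangle inequalities, which holds because $v_2$ cannot lie on the line through $u_1$ and $v_1$ (the two edge-lines meet only at $s\neq v_2$), and symmetrically for $v_1$; no appeal to Lemma~\ref{lemma:Yao3Aux} is needed here, that auxiliary lemma being reserved for the Yao analogue of the empty-cone lemma.
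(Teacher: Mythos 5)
Your proof is correct, but it takes a genuinely different route from the paper's, and a substantially shorter one. The paper proves this lemma by fixing one $0$-edge $uv$ and splitting into four cases according to where the apex $x$ of the other edge lies relative to $u$ (in $C_0^u$ on either side of the line $uv$, in $C_2^u$ above or below the boundary line $k$, in $C_1^u$ above or below the perpendicular line $l$); two of those cases are settled by invoking the circle-arc lemma (Lemma~\ref{lemma:Yao3Aux}) to show that the region where $y$ would have to lie is enclosed by the circle of radius $|uv|$ centered at $u$. So your closing guess is off on one point: Lemma~\ref{lemma:Yao3Aux} exists in the paper precisely to serve this planarity lemma, while the empty-cone lemma (Lemma~\ref{lemma:Yao3Empty}) is proved with tangent half-planes instead. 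Your route---keep the metric-free half of Lemma~\ref{lemma:Planar} (convexity of the cone gives $C_0^s \subseteq C_0^{u_1} \cap C_0^{u_2}$, hence $v_2 \in C_0^{u_1}$ and $v_1 \in C_0^{u_2}$), then play the two Euclidean closest-point inequalities against the strict crossing-diagonals inequality---is sound, symmetric, and case-free, and it works verbatim for Yao graphs with any number of cones, since it uses nothing about the cone beyond its convexity; the paper's analysis, by contrast, is tailored to three $120^\circ$ cones. The one spot to patch is your strictness claim: you assume the two edge-lines are distinct, and if all four points were collinear, both triangle inequalities would degenerate to equalities. That configuration cannot in fact occur---collinearity forces the two edge directions to coincide, so one apex, say $u_2$, would lie strictly between $u_1$ and $v_1$, putting $u_2 \in C_0^{u_1}$ closer to $u_1$ than $v_1$ and contradicting that $u_1v_1$ is a $0$-edge---but a sentence saying so belongs in the proof.
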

\begin{proof}
  We look at the 0-edges. The cases for the other edges are analogous. Let $u v$ be a 0-edge such that $v\in C_0^u$ and assume without loss of generality that $v$ lies to the right of $u$. We prove the lemma by contradiction, so assume that some 0-edge $x y$  crosses $u v$ and let $y \in C_0^x$. Note that for $x y$ to cross $u v$, $C_0^x$ must contain some part of $u v$. Hence $v$ lies in $C_0^x$. 

Let $k$ be the line through the right boundary of $C_0^u$ and let $l$ be the line through $u$, perpendicular to $k$.
We consider four cases, depending on the location of $x$ with respect to $u$; see \figref{fig:Yao3PlanarForestsA} (left): (a) $x \in C_0^u$ to the left of the line $u v$, (b) $x \in C_2^u$ above $k$, (c) $x \in C_2^u$ below $k$ or $x \in C_1^u$ below $l$, (d) $x \in C_1^u$ above $l$ or $x \in C_0^u$ to the right of the line $u v$. 

  \begin{figure}[t]
    \centering
    \includegraphics{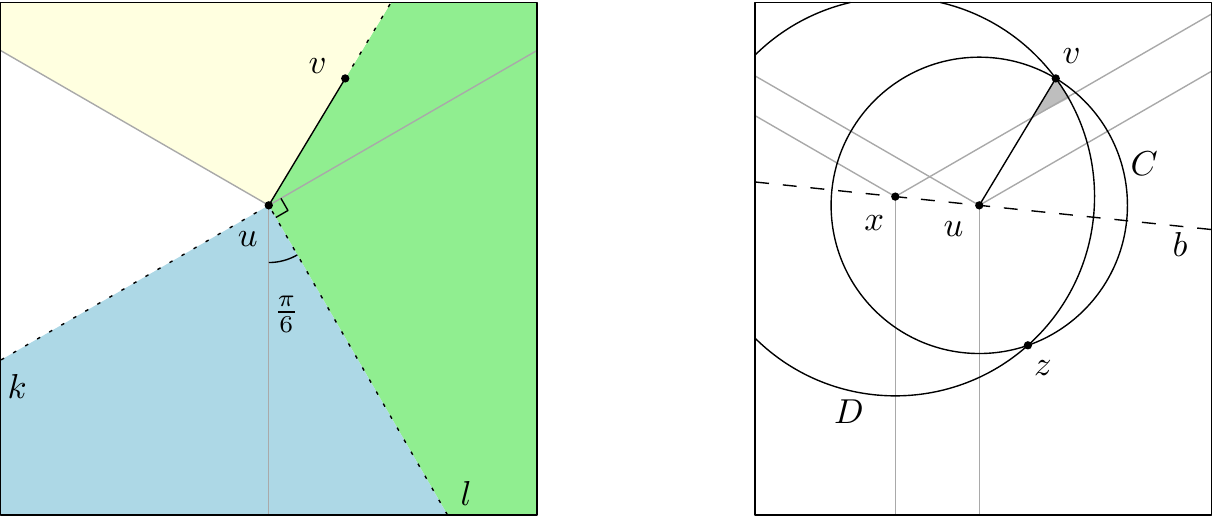}
    \caption{\small Left: The four cases. Right: The case when $x$ lies in $C_2^u$ and above $k$.}
    \label{fig:Yao3PlanarForestsA}
  \end{figure}

  \textbf{Case (a):} $x \in C_0^u$ to the left of the line $u v$. Since~$v$ lies inside $C_0^x$ and $v$ lies to the right of $u$, $x$ lies in the circle centered at $u$ having radius $|u v|$. Thus, $x$ lies closer to $u$ than $v$, contradicting the existence of edge~$u v$. 

  \textbf{Case (b):} $x \in C_2^u$ above $k$. We apply Lemma~\ref{lemma:Yao3Aux} as follows, see \figref{fig:Yao3PlanarForestsA} (right): Let $C$ be the circle centered at $u$ having radius $|u v|$. 
Let $b$ be the line through $u$ and $x$, and let $z$ be the reflection of $v$ in $b$.  
Note that this implies that $z$ lies outside $C_0^u$. Let $D$ be the circle centered at $x$ having radius $|x v|$. Since $x$ lies to the left of $u$, Lemma~\ref{lemma:Yao3Aux} gives us that the right arc $v z$ of circle $D$ is enclosed by circle $C$. Since the area in which $y$ must lie for $x y$ to cross $u v$ is bounded by the right boundary of $C_0^x$, edge $u v$, and the right arc $v z$ of circle $D$, it is enclosed by $C$. 
Therefore, any such point would lie in $C_0^u$ and be closer to $u$ than $v$, contradicting the existence of edge $uv$.

  \textbf{Case (c):} $x \in C_2^u$ below $k$ or $x \in C_1^u$ below $l$; see \figref{fig:Yao3PlanarForestsB} (left). Since~$u$ lies in $C_0^x$, $y$ needs to be closer to $x$ than $u$ for edge $x y$ to exist. Hence it must lie inside the circle $C$ centered at $x$ with radius $|x u|$. Look at the lower half-plane defined by the line tangent to $C$ at $u$ and note that $C$ is contained in this half-plane. However, the half-plane does not intersect $C_0^u$ to the right of $u$ and hence no point $y$ inside the half-plane can be used to form an edge $x y$ that crosses~$u v$. 

  \textbf{Case (d):} $x \in C_1^u$ above $l$ or $x \in C_0^u$ to the right of the line $u v$. We apply Lemma~\ref{lemma:Yao3Aux} as follows, see \figref{fig:Yao3PlanarForestsB} (right): Let $C$ be the circle centered at $u$ having radius~$|u v|$. Let $b$ be the line through $u$ and $x$, and let $z$ be the reflection of $v$ in $b$. Note that $z$ lies outside $C_0^x$. Let $D$ be the circle centered at $x$ having radius $|x v|$. Since $x$ lies to the right of $u$, Lemma~\ref{lemma:Yao3Aux} gives us that the left arc $v z$ of circle $D$ is enclosed by circle $C$. Since the area in which $y$ must lie for $x y$ to cross $u v$ is bounded by edge $u v$, the left arc $v z$ of circle $D$, and either the left boundary of $C_0^x$ (if $u \notin C_0^x$) or the line $u x$ (if $u \in C_0^x$), it is enclosed by~$C$. Therefore, there does not exist a point $y \in C_0^x$ such that $x y$ intersects $u v$. 
\end{proof}

\begin{figure}[ht]
    \centering
    \includegraphics{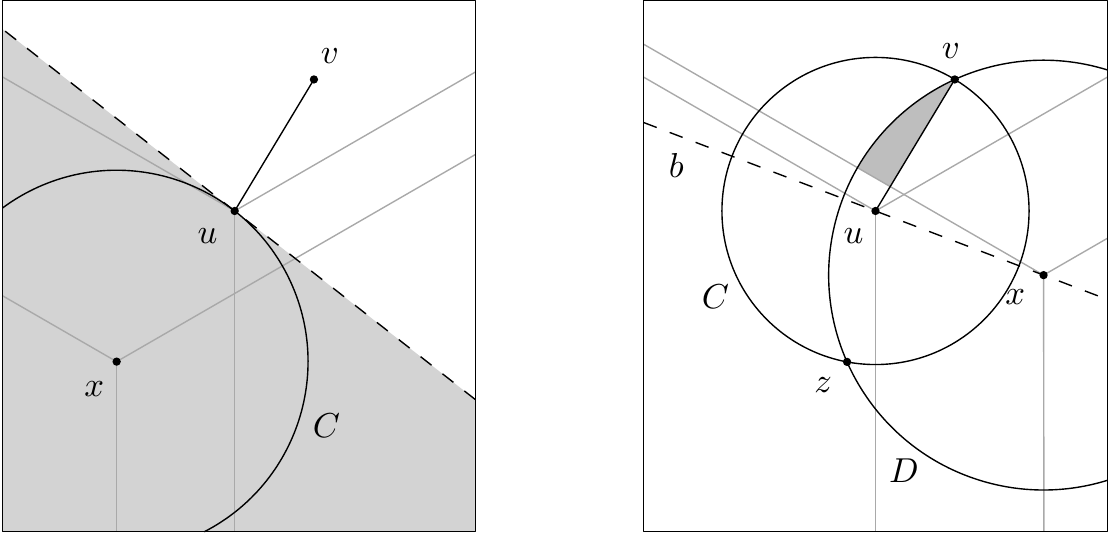}
    \caption{\small Left: The case when $x \in C_2^u$ below $k$ or $x \in C_1^u$ below $l$. 
    Right: The case when $x \in C_1^u$ above $l$ or $x \in C_0^u$ to the right of the line $u v$.}
    \label{fig:Yao3PlanarForestsB}
  \end{figure}

\begin{lemma}
  \label{lemma:Yao3Forest}
Every $i$-path of the \YaoThree-graph is well-defined and has an \isink as one of its endpoints.
\end{lemma}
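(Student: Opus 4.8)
The plan is to mirror the proof of Lemma~\ref{lemma:Forest}, since the argument given there never relied on the particular distance measure of the $\theta_3$-graph—it used only the geometry of the cones, which the \YaoThree-graph shares. As in that proof, I would treat only 0-paths explicitly and note that the cases of 1- and 2-paths follow by the same reasoning after an appropriate rotation of the plane.

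The key observation I would establish is that whenever the 0-cone of a point $p$ is nonempty, the point $q$ to which $p$ connects (now its closest \emph{Euclidean} neighbour in $C_0^p$) has strictly larger $y$-coordinate than $p$. This holds purely because of the shape of the 0-cone: its angular bisector is the upward vertical ray and its two bounding rays each make an angle of $60^\circ$ with this ray, so the two boundary rays rise at $30^\circ$ above the horizontal and every point in the interior of $C_0^p$ lies strictly above $p$. Crucially, this conclusion is independent of how ``closest'' is defined, so it applies verbatim to the \YaoThree-graph.

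From this monotonicity I would conclude, exactly as in Lemma~\ref{lemma:Forest}, that the $y$-coordinates of the vertices visited along the 0-path from $p$ form a strictly increasing sequence. Since $P$ is finite, no vertex can be repeated, so the recursion must terminate; it can only terminate at a vertex whose 0-cone is empty, that is, at a \zerosink. This simultaneously shows that the 0-path is well-defined (the recursion is finite) and that it has a \zerosink as one of its endpoints.

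I do not anticipate a genuine obstacle here. The only point requiring care is confirming that replacing the projection distance by the Euclidean distance does not disturb the upward-monotonicity of 0-paths; since that monotonicity rests solely on the cone geometry and not on the distance measure, the transfer from Lemma~\ref{lemma:Forest} is immediate.
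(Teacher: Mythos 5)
Your proposal is correct and matches the paper's approach exactly: the paper proves Lemma~\ref{lemma:Yao3Forest} by declaring it analogous to Lemma~\ref{lemma:Forest}, whose argument is precisely the one you give---any point in $C_0^p$ lies strictly above $p$ regardless of the distance measure, so $y$-coordinates increase monotonically along a 0-path and finiteness of $P$ forces termination at a \zerosink. Your explicit justification via the $30^\circ$ elevation of the cone's bounding rays is a slightly more detailed rendering of the same observation.
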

\begin{proof}
The proof of this lemma is analogous to Lemma~\ref{lemma:Forest} for the $\theta_3$-graph. 
\end{proof}

\begin{lemma}
\label{lemma:Yao3Empty}
If a cone of a point is empty, then no edge in the \YaoThree-graph can cross this cone.
\end{lemma}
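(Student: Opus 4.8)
The plan is to mirror the argument of Lemma~\ref{lemma:Empty cones cannot be crossed}, replacing its projection-based ``closer than $u$'' region (a half-plane) by the Euclidean one (a disk). I would prove it for an empty \zerocone; the other cases follow by symmetry. So let $u$ be a point whose cone $C_0^u$ is empty and, for contradiction, suppose some edge $xy$ crosses $C_0^u$. Since $C_0^u$ contains no point of $P$, neither $x$ nor $y$ lies in it, so for the segment $xy$ to meet $C_0^u$ its two endpoints must lie in opposite sectors of the double wedge formed by extending the boundary rays of $C_0^u$; exactly as in Lemma~\ref{lemma:Empty cones cannot be crossed}, one endpoint lies in $C_2^u$ (the left sector) and the other in $C_1^u$ (the right sector). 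The edge $xy$ is the closest-point edge of (at least) one of its endpoints; relabeling that endpoint as $x$ and, if necessary, reflecting the whole picture across a vertical line through $u$ (which fixes $C_0^u$ and swaps $C_1^u \leftrightarrow C_2^u$ while preserving all Euclidean distances, hence edge ownership), I may assume that $x$ owns the edge and lies in the left sector. A direct check of the angles then shows, just as before, that both $u$ and $y$ lie in $C_1^x$, so $xy$ is in fact the $1$-edge of $x$.

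The heart of the argument is the following Euclidean replacement for the line $\ell$ used in Lemma~\ref{lemma:Empty cones cannot be crossed}. Because $u \in C_1^x$ and $y$ is the closest point to $x$ in $C_1^x$, the edge $xy$ can exist only if $|xy| \le |xu|$, i.e.\ only if $y$ lies in the closed disk $D$ centered at $x$ with radius $|xu|$. I claim that every point $p$ of the open right sector satisfies $|xp| > |xu|$, so that $D$ is disjoint from the right sector apart from $u$ itself. Indeed, $x$ lies in the left sector, so the direction from $u$ to $x$ makes an angle of more than $90^\circ$ with the direction from $u$ to any point $p$ of the right sector; equivalently, $p$ lies strictly on the far side of the line through $u$ perpendicular to $xu$, while $D$ is tangent to that line at $u$ and lies on the side of $x$. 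Hence the angle $\angle x u p$ is obtuse, and $|xp| > |xu|$ follows from the law of cosines. Since $y$ lies in the right sector, this gives $|xy| > |xu|$, contradicting $|xy| \le |xu|$, so the assumed edge $xy$ cannot exist.

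The main obstacle, compared to the $\theta_3$ case, is precisely this separation step: in the $\theta_3$-graph the ``closer than $u$'' region is a half-plane bounded by the line perpendicular to the \emph{bisector} of $C_1^x$, whereas in the \YaoThree-graph it is a disk, and the correct separating line is the one perpendicular to the segment $xu$ (the tangent to $D$ at $u$). Identifying this line and checking that the right sector lies entirely on its far side---which reduces to the obtuse-angle computation above---is the only genuinely new ingredient; note that, unlike in Lemma~\ref{lemma:Yao3Planar}, the auxiliary Lemma~\ref{lemma:Yao3Aux} is not needed here. The remaining care is bookkeeping: justifying the reduction to a single owner via the vertical reflection, and confirming the cone memberships $u, y \in C_1^x$, both of which are routine angle chases.
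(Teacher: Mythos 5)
Your proof is correct, and it takes a noticeably cleaner route than the paper's own argument, although both hinge on the same geometric object: the disk of points closer to $x$ than $u$, separated by its tangent line at $u$. The paper localizes the endpoints only up to cones ($x \in C_2^u$, $y \in C_1^u$) and must then spend two paragraphs ruling out the configurations $y \in C_0^x$ and $u \in C_0^x$ before the disk argument can start; you instead import the double-wedge localization from Lemma~\ref{lemma:Empty cones cannot be crossed}, which pins $x$ and $y$ into the left and right sectors and makes $u, y \in C_1^x$ a one-line angle check. The endgames also differ: the paper shows the disk lies in the half-plane left of the tangent at $u$ and then splits into cases according to whether $x$ lies above or below the horizontal through $u$, concluding either that $y \notin C_1^u$ or that $xy$ cannot cross $C_0^u$; your obtuse-angle/law-of-cosines computation shows directly that every point of the right sector is strictly farther from $x$ than $u$, so $u$ itself witnesses that $y$ is not the closest point to $x$ in $C_1^x$ --- a direct contradiction with no case split. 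Finally, you make explicit the reduction to the case where the endpoint in the left sector owns the edge (via the vertical reflection through $u$, which swaps $C_1$ and $C_2$ everywhere and preserves Euclidean distances); the paper uses this tacitly, both here and in Lemma~\ref{lemma:Empty cones cannot be crossed}, so spelling it out is a small gain in rigor. Your observation that Lemma~\ref{lemma:Yao3Aux} is not needed for this lemma also agrees with the paper, which invokes it only in Lemma~\ref{lemma:Yao3Planar}.
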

\begin{proof}
  We assume without loss of generality that $C_0^u$ does not contain any points. We prove the lemma by contradiction, so assume that there exists an edge $x y$ that crosses $C_0^u$. Since no edge between two points in the same cone can cross another cone, let $x \in C_2^u$ and $y \in C_1^u$. 

Point $y$ cannot lie in $C_0^x$, since either $C_0^x$ does not intersect $C_1^u$ (if $u \notin C_0^x$) or the line segment between $x$ and $y$ does not intersect $C_0^u$ (if $u \in C_0^x$). Hence $y$ must lie in $C_1^x$. 

  If $u \in C_0^x$, $C_1^x$ does not intersect $C_0^u$ and thus the line segment between $x$ and $y$ cannot intersect $C_0^u$ either. Therefore both $u$ and $y$ lie in $C_1^x$. 
  Let $C$ be the circle centered at $x$ with radius $|x u|$. For the edge~$x y$ to exist, $y$ must be closer to $x$ than $u$, which means that $y$ must lie in $C$. Note that $C$ is contained in the half-plane to the left of the tangent to $C$ at $u$.

  If $x$ lies on or above the horizontal line through $u$, the half-plane does not intersect $C_1^u$. If $x$ lies below the horizontal line through $u$, the half-plane does not intersect $C_1^u$ above $u$ and thus $x y$ would not cross $C_0^u$. Since $y$ is enclosed by $C$, $C$ is contained in the half-plane, and there is no point $p$ in the half-plane such that $p \in C_1^u$ and $p x$ crosses $C_0^u$, $x y$ cannot cross $C_0^u$ either. 
\end{proof}

Using Lemmas~\ref{lemma:Yao3Planar}, \ref{lemma:Yao3Forest} and~\ref{lemma:Yao3Empty}, the proof of Theorem~\ref{thm:Theta 3 is connected} translates directly to the \YaoThree-graph yielding the following result.
\begin{theorem}
The \YaoThree-graph is connected.
\end{theorem}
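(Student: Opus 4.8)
The plan is to observe that the proof of Theorem~\ref{thm:Theta 3 is connected} is essentially combinatorial: it manipulates \isinks, $i$-paths, and $i$-barriers, and it touches the underlying geometry only through three structural facts---planarity of each edge class, well-definedness of $i$-paths with an \isink at one end, and the impossibility of crossing an empty cone. Since we have now re-established all three of these for the \YaoThree-graph as Lemmas~\ref{lemma:Yao3Planar}, \ref{lemma:Yao3Forest}, and~\ref{lemma:Yao3Empty}, the entire argument should carry over verbatim, with the Yao lemmas substituted for their $\theta_3$ counterparts.

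First I would note that the notions of \isink, $i$-path, and $i$-barrier depend only on the cone partition and on the edge set, not on how distance within a cone is measured. In particular, an \isink is a point with an empty \icone, a purely geometric condition, and an $i$-barrier is the union of an $i$-path between two \isinks. The defining property of a barrier---that no $i$-path can cross from one side to the other---followed for the $\theta_3$-graph from Lemmas~\ref{lemma:Planar} and~\ref{lemma:Empty cones cannot be crossed}; for the \YaoThree-graph it follows identically from Lemmas~\ref{lemma:Yao3Planar} and~\ref{lemma:Yao3Empty}. Thus barriers behave exactly as before.

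Next I would re-examine Lemma~\ref{lemma:0-sink configuration}, the configuration lemma, which is the real engine of the connectivity proof. Its proof proceeds by induction on the number of \zerosinks to the right of $c$, and every step is a barrier argument: the 1-path $a\pathto a'$ blocks $b'$ from lying to the left of $a'$, the 0-barrier $a'\pathto c$ forces the terminal \zerosink of $b'$ to lie no farther left than $c$, and empty cones may not be crossed. None of these uses the projection distance, so replacing the $\theta_3$ lemmas by their Yao analogues yields a valid proof of the configuration lemma for the \YaoThree-graph. With that in hand, the proof of Theorem~\ref{thm:Theta 3 is connected} transfers directly: we pick the leftmost \zerosink $a$ with component $A$ and the leftmost \zerosink $b\notin A$, build the 2-barrier $b\pathto b'$ and the 0-barrier $b'\pathto c$, route the 1-path from $a$ to $a'$ past both barriers (permissible since it consists of 1-edges), and let $d$ be the terminal \zerosink of $a'$; the configuration lemma applied to $a$, $b$, $d$ then forces $b\in A$, a contradiction.

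The only point requiring care---and the step I would check most closely---is that nothing in the $\theta_3$ argument silently exploits a property special to the projection metric, such as a monotonicity of projected distances along a path or a distance-based ordering of sinks. A careful reading confirms that it does not: all spatial reasoning is about left/right ($x$-coordinate) ordering and about membership in cones, both of which are metric-independent, while the one place monotonicity is invoked (the $y$-coordinates increasing along a 0-path) is precisely the content of the Forest lemma, re-established for the Yao setting in Lemma~\ref{lemma:Yao3Forest}. Hence the translation is faithful and the \YaoThree-graph is connected.
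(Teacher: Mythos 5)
Your proposal is correct and is exactly the paper's argument: the paper proves this theorem in one sentence, stating that with Lemmas~\ref{lemma:Yao3Planar}, \ref{lemma:Yao3Forest}, and~\ref{lemma:Yao3Empty} in hand, the proof of Theorem~\ref{thm:Theta 3 is connected} translates directly to the \YaoThree-graph. Your more careful verification that the barrier arguments and Lemma~\ref{lemma:0-sink configuration} use only cone membership, left/right ordering, and the three structural lemmas---never the projection metric itself---is a faithful (and welcome) elaboration of that same claim.
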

\paragraph{Acknowledgments.} This problem was introduced during the 2012 Fields Workshop on Discrete and Computational Geometry held at Carleton University in Ottawa, Canada. The research of Oswin Aichholzer was partially supported by the ESF EUROCORES programme EuroGIGA - CRP `ComPoSe', Austrian Science Fund (FWF): I648-N18. Work by Sang Won Bae was supported by the Contents Convergence Software Research Center funded by the GRRC Program of Gyeonggi Province, South Korea. The research of Luis Barba, Prosenjit Bose, Andr\'e van Renssen, and Sander Verdonschot was supported in part by NSERC. Matias Korman received support from the Secretary for Universities and Research of the Ministry of Economy and Knowledge of the Government of Catalonia, the European Union, and projects MINECO MTM2012-30951, Gen. Cat. DGR2009SGR1040, ESF EUROCORES programme EuroGIGA -- CRP `ComPoSe': MICINN Project EUI-EURC-2011-4306.

\bibliographystyle{abbrv}
\bibliography{Theta3CGTA}

\begin{thebibliography}{10}

\bibitem{aichholzer2013theta3}
O.~Aichholzer, S.~W. Bae, L.~Barba, P.~Bose, M.~Korman, A.~van Renssen,
  P.~Taslakian, and S.~Verdonschot.
\newblock Theta-3 is connected.
\newblock In {\em Proceedings of the 25th Canadian Conference on Computational
  Geometry (CCCG 2013)}, pages 205--210, 2013.

\bibitem{barba2013stretch}
L.~Barba, P.~Bose, J.-L. De~Carufel, A.~van Renssen, and S.~Verdonschot.
\newblock On the stretch factor of the {T}heta-4 graph.
\newblock In {\em Proceedings of the 13th Algorithms and Data Structures
  Symposium (WADS 2013)}, pages 109--120, 2013.

\bibitem{bonichon2010connections}
N.~Bonichon, C.~Gavoille, N.~Hanusse, and D.~Ilcinkas.
\newblock Connections between theta-graphs, {D}elaunay triangulations, and
  orthogonal surfaces.
\newblock In {\em Proceedings of the 36th International Workshop on Graph
  Theoretic Concepts in Computer Science (WG 2010)}, pages 266--278, 2010.

\bibitem{bose2012optimal}
P.~Bose, J.-L. De~Carufel, P.~Morin, A.~van Renssen, and S.~Verdonschot.
\newblock Optimal bounds on theta-graphs: More is not always better.
\newblock In {\em Proceedings of the 24th Canadian Conference on Computational
  Geometry (CCCG 2012)}, pages 305--310, 2012.

\bibitem{bose2013theta5}
P.~Bose, P.~Morin, A.~van Renssen, and S.~Verdonschot.
\newblock The {$\theta_5$}-graph is a spanner.
\newblock In {\em Proceedings of the 39th International Workshop on
  Graph-Theoretic Concepts in Computer Science (WG 2013)}, pages 100--114,
  2013.

\bibitem{bose2013spanning}
P.~Bose, A.~van Renssen, and S.~Verdonschot.
\newblock On the spanning ratio of theta-graphs.
\newblock In {\em Proceedings of the 13th Algorithms and Data Structures
  Symposium (WADS 2013)}, pages 182--194, 2013.

\bibitem{clarkson1987approximation}
K.~L. Clarkson.
\newblock Approximation algorithms for shortest path motion planning.
\newblock In {\em Proceedings of the 19th {ACM} Symposium on the Theory of
  Computing ({STOC} 1987)}, pages 56--65, 1987.

\bibitem{damian2011undirected}
M.~Damian and A.~Kumbhar.
\newblock Undirected connectivity of sparse yao graphs.
\newblock In {\em Proceedings of the 7th ACM SIGACT/SIGMOBILE International
  Workshop on Foundations of Mobile Computing (FOMC 2011)}, pages 25--32, 2011.

\bibitem{el2009yao}
N.~M. El~Molla.
\newblock {\em Yao spanners for wireless ad hoc networks}.
\newblock PhD thesis, Villanova University, 2009.

\bibitem{keil1988approximating}
J.~M. Keil.
\newblock Approximating the complete {E}uclidean graph.
\newblock In {\em Proceedings of the 1st Scandinavian Workshop on Algorithm
  Theory ({SWAT} 1988)}, pages 208--213, 1988.

\bibitem{NS06}
G.~Narasimhan and M.~Smid.
\newblock {\em Geometric Spanner Networks}.
\newblock Cambridge University Press, 2007.

\bibitem{ruppert1991approximating}
J.~Ruppert and R.~Seidel.
\newblock Approximating the {$d$}-dimensional complete {E}uclidean graph.
\newblock In {\em Proceedings of the 3rd Canadian Conference on Computational
  Geometry ({CCCG} 1991)}, pages 207--210, 1991.

\end{thebibliography}

\end{document}